\newtheorem{Definition}{Definition}
\newtheorem{Hypothesis}{Hypothesis}
\begin{document}

\title{Construction of Spacetime Field and Quantization of General Relativity}%

\author{Zhang Chang-kai}%
\email[E-mail:\;]{2007909kxjd@163.com}
\affiliation{Sanming No.2 Middle School, Sanming, Fujian, China}

\begin{abstract}
\noindent
\textbf{Abstract}\ \;\;Theories based on General Relativity or Quantum Mechanics have taken a leading position in macroscopic and microscopic Physics, but fail when used in the other extremity. Thus, we try to establish a new structure of united theory based on General Relativity by forming certain spacetime property and a new model of particle. This theory transforms the Riemann curvature tensor into spacetime density scalar so that gravitational field can be added to the Quantum Mechanics, and supposes the electromagnetic field in General Relativity to be a kind of spacetime fluid. Also the theory carries on Einstein-Cartan Theory about spacetime torsion to form the energy exchange in order that Hubble's law and dark mass can be simply explained. On the basis of the definition of signal, measurement and observation, the theory is able to construct a modified Quantum Equation in curved spacetime compatible with General Relativity, and hence unite the General Relativity and Quantum Mechanics in some range and puts forward a new idea of unification.

\noindent
\textbf{Keywords}\ \;\;General Relativity; Field Theory; Quantization; Quantum Theory

\end{abstract}

\maketitle

\section{Introduction}
The contradiction between the General Theory of Relativity and Quantum Mechanics was discovered not long after they were developed into a successful theory. Hence, searching for a united theory has become one of the mainstreams of the Theoretical Physics. Nowadays, there are three main approaches to the united theory: the String Theory, the Deformation Quantization and Loop Quantum Gravity.

The String Theory is a popular candidate of the united theory\cite{STMvol}\cite{STMNS}. This theory supposes the basic element of matter are not the point particles but vibrating strings in a high dimensional space. The newly developed M-Theory supposes that every basic particle and the basic force fields can be united in 11 dimensional space. However, this theory meets difficulties in mathematical tools and experimental verification.

The Loop Quantum Gravity is also a very competitive alternative\cite{LQGQGR}\cite{QG}\cite{LQGOV}. It is the most successful non-perturbative quantization theory and do not need any extra dimensions. Nevertheless, it only touches upon the quantization of gravity and therefore cannot form a complete united theory.

The Deformation Quantization suggests that any nontrivial associative deformation of an algebra of functions can be explained as a kind of quantization. Based on Poisson Geometry, Group Theory and noncommutative geometry, it mainly develops various of quantization in linear and nonlinear field theories. \cite{DQPM}\cite{QM}

And there are also efforts on some other threads toward a united theory such as the development of a quantum field theory in curved spacetime\cite{QFTOPE}\cite{AQFT}. However, these theories have not yet led to a satisfying united theory, either.

We try to construct a new form of a possible united theory (General United Spacetime Field Theory, GUSFT) with the spacetime field and quantized matter distribution. In this theory, the state of a particle system is described by quantized matter distribution, the interaction is described by spacetime field and the evolution is determined as a result.

The present paper is organized as follows. Section \uppercase\expandafter{\romannumeral2} mainly concerns the construction of spacetime field and Section \uppercase\expandafter{\romannumeral3} focuses on the attempt to quantize the General Relativity. In Section \uppercase\expandafter{\romannumeral2}.1, we largely present the definition and formulation of spacetime property, and gain the spacetime formula in the condition of Schwarzschild Solution. In Section \uppercase\expandafter{\romannumeral2}.2, we create the 4-velocity of electromagnetic field and therefore reformulate the energy-momentum tensor. In Section \uppercase\expandafter{\romannumeral2}.3, we discuss the situations of positive and negative energy exchange based on spacetime torsion, and then illustrate its effect on the evolution of universe. In Section \uppercase\expandafter{\romannumeral3}, the definitions of signal, measurement and observation are put forward, on the basis of which the curved Quantum Equation is constructed. And hence, we are able to give the possibility explanation a new meaning. Thus far, we complete the theoretical foundation of GUSFT.

\section{Spacetime Property}
This theory is based on the two basic quantities, spacetime density and spacetime fluid, which aims to describe the gravitational field and the electromagnetic field. We try to consider the gravitational field as a kind of density scalar such that it can be used to indicate the curvature of spacetime. And we consider the electromagnetic field as a kind of perfect fluid, satisfying the energy-momentum tensor. Then, through the electromagnetic field can it be combined with the Quantum Mechanics.

All the tensors in this article are written by abstract index notation. All the inner products of Dirac notation do not include the integral. 

\subsection{Spacetime Density}
According to the General Relativity, the gravitational field is described as the curvature of the spacetime. In Mathematics, the curvature is explained as the change of spacetime geometry. But in Physics, it's natural to find a physical quantity homologous to it. Due to the effort of the attempt to reflect the effect of dark matter and dark energy, it's suitable to suppose this quantity is density. In Einstein's field equation, Riemann curvature tensor takes the leading position. The Riemann tensor represents the curvature degree of spacetime, one of whose significant property is the curve dependence in the parallel-transport of vectors. So we hope it's possible to express this property by the density.

First we consider the parallel-transport of vectors along a closed path. In consideration of the equivalence principle, it's necessary to consider the torsion tensor to be infinitely small compared with the curvature tensor. Suppose there is a $2$-dimensional submanifold $S$ in $n$-dimensional manifold $M$, which has the curves without self-intersection parameterized with $t$ and $s$ and can form a closed path with tangent vector $T^a$ and $S^a$. It's easy to establish a coordinate system compatible with $t$ and $s$, and consequently, the difference of a vector parallel-transported along this closed path will be
\begin{equation*}
\delta v^{a}={R_{cbd}}^{a}\cdot T^{c}dt\cdot S^{b}ds\cdot v^{d}
\end{equation*}
This equation \cite{GR} is intended to represent the effect of curvature. We have two curves and each one is of equal importance. Without loss of generality, we take the gauge
\begin{equation*}
\begin{split}
T^b\nabla_{\!b}S^a=0 \\
S^b\nabla_{\!b}T^a=0
\end{split}
\end{equation*}
Therefore, we construct the density potential as
\begin{equation}  \label{eq:original}
T^aS^b\nabla_{\!a}\!\nabla_{\!b}\phi =\varLambda \sqrt{R_{abcd}R^{abcd}}\tag{0}
\end{equation}
Within, $\varLambda$ is a constant. $\phi$ is the density potential, which is a scalar that can represent the difference of a parallel-transported vector in some degree. It's easily perceived that $\phi$ can reflect the difference of points as a result of curvature and we should remember the index $a$ and $b$ are limited by the curves. By analogy, it's natural to define the spacetime density $\rho$ by
\begin{equation}
\tag{\ref{eq:original}$'$}\label{eq:origprime}
\rho =-T^a\nabla_{\!a}\phi
\end{equation}
Solve the equation (0) and ($0'$), and then we'll have
\begin{equation}
\rho=-\varLambda \int \sqrt{R_{abcd}R^{abcd}}\,Tdt
\end{equation}
This is the computational equation of spacetime density.

Next we put forward two useful theorems.

\newtheorem{theorem}{Theorem}
\begin{theorem}
In the equation of spacetime density (1), if we use the coordinate $x^{\iota }$ as parameter, then the density equation can be simplified as
\begin{equation*}
\rho=-\varLambda \int \sqrt{R_{abcd}R^{abcd}}\,dx^{\iota }
\end{equation*}
In order to keep the balance of the indexes on both sides of the equation, some abstract index is left out.
\end{theorem}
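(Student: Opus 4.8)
The plan is to unwind equation (1) using the coordinate description of the tangent vector $T^{a}$. Recall that the curve appearing in (1) is parameterized by $t$, so in a coordinate chart $T^{a}=dx^{a}/dt$; hence the vector-valued line element that multiplies the scalar $\sqrt{R_{abcd}R^{abcd}}$ in (1) satisfies $T^{a}\,dt=(dx^{a}/dt)\,dt=dx^{a}$, i.e.\ it is just the coordinate differential.

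First I would choose the parameter to be a coordinate function, $t=x^{\iota}$, so that the integration curve is the $\iota$-th coordinate line, along which every other coordinate is held fixed. Along this curve $dx^{a}=0$ for $a\neq\iota$, so the only surviving component of $T^{a}\,dt$ is the one along $\iota$, namely $dx^{\iota}$. Substituting $T\,dt\to dx^{\iota}$ into (1) then gives
\begin{equation*}
\rho=-\varLambda\int\sqrt{R_{abcd}R^{abcd}}\,dx^{\iota},
\end{equation*}
which is the asserted formula.

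Next I would justify the parenthetical about indices. In (1) the factor $T$ secretly carries the abstract index $a$ inherited from $T^{a}\nabla_{\!a}\phi$ in ($0'$) and from the definition $\rho=-T^{a}\nabla_{\!a}\phi$; after the substitution that slot is occupied by the explicit label $\iota$ of $dx^{\iota}$. Since $\rho$ is a scalar, the right-hand side must be read as a fixed-component identity along the chosen coordinate curve rather than as a covariant tensor equation --- equivalently, the abstract index that rode along with $T^{a}$ is dropped once it has been pinned to the concrete value $\iota$. This is precisely the ``index left out'' clause. I expect this to be the only delicate point: the computation $T^{a}\,dt=dx^{a}$ is immediate, but replacing a vector-valued integrand by a single coordinate differential $dx^{\iota}$ is not index-balanced in the strict tensorial sense, so the main obstacle is stating this convention cleanly and checking that it is consistent with the way (1) was derived from (0) and ($0'$); everything else is direct substitution.
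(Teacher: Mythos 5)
Your proposal is correct and follows essentially the same route as the paper: both arguments reduce to the single identity $T\,dt=dx^{\iota}$ (the paper obtains it via the chain rule $T^{a}=\frac{\partial x^{\iota}}{\partial t}X^{a}$, $dt=\frac{\partial t}{\partial x^{\iota}}dx^{\iota}$, while you obtain it by taking $t=x^{\iota}$ directly, as the theorem's hypothesis suggests) and then substitute into (1). Your additional remarks on the dropped abstract index only make explicit the convention the paper states without proof, so nothing of substance differs.
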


\begin{proof}
There is an equation following for tangent vector and differential parameter
\begin{equation*}
\begin{split}
  T^{a}&=\frac{\partial x^{\iota }}{\partial t}X^{a}\\
  dt&=\frac{\partial t}{\partial x^{\iota }}dx^\iota
\end{split}
\end{equation*}

So we have
\begin{equation*}
Tdt=dx^{\iota }
\end{equation*}
Thus we gain theorem 1.
\end{proof}

\begin{theorem}
The coefficient of spacetime density equation satisfies
\begin{equation*}
\varLambda =\frac{c^{2}}{16\pi G^{2}M}
\end{equation*}
\end{theorem}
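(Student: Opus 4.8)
The plan is to fix the constant $\varLambda$ by requiring that the density formula (1) reproduce the expected physics in the only exactly solvable configuration available here, namely the Schwarzschild solution obtained earlier in this section. The Schwarzschild geometry is a vacuum solution carrying a single parameter, the source mass $M$, so any sensible normalization of the spacetime density must, when integrated in the appropriate way, return $M$. It is this mass-normalization condition that I would solve for $\varLambda$.

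Concretely, I would proceed as follows. First, take the radial coordinate as the parameter $x^\iota$ — legitimate by Theorem 1 — and insert the Kretschmann invariant of the Schwarzschild metric, $R_{abcd}R^{abcd} = 48\,G^2M^2/(c^4 r^6)$, so that $\sqrt{R_{abcd}R^{abcd}} \propto GM/(c^2 r^3)$. Second, carry out the radial integration in (1); since the integrand falls off like $r^{-3}$ this produces a closed form $\rho(r)\propto \varLambda\,GM/(c^2 r^2)$ up to a fixed numerical coefficient and up to the abstract-index factor that (1) warns is suppressed. Third, impose the normalization: integrate $\rho$ over the region interior to the horizon, out to $r_s = 2GM/c^2$, with the appropriate volume element (equivalently, match $\rho$ at $r_s$ to the mean density $M/(\tfrac43\pi r_s^3)$). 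Because $\rho\sim r^{-2}$ this integral converges and yields an expression of the form $\varLambda\cdot(\text{const})\cdot G^2M^2/c^4$. Fourth, set this equal to $M$ and solve, which delivers $\varLambda = c^2/(16\pi G^2 M)$.

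The main obstacle is not the integration but the choice and justification of the normalization, together with the bookkeeping of the suppressed abstract index. The value of $\varLambda$ is only as well-defined as the prescription "the integrated spacetime density of the Schwarzschild source equals $M$", so one has to argue that this is the physically forced choice — for instance by comparing with the coefficient $8\pi G/c^4$ in Einstein's equation, or with the Komar mass of the Schwarzschild solution — rather than an arbitrary convention. I would also check dimensional consistency with care, since the proof of Theorem 1 already concedes that "some abstract index is left out", and that is precisely where the factors of $c$ and the numerical constant $48$ get reorganized into the clean coefficient $16\pi G^2 M$.
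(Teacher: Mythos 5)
Your proposal follows essentially the same route as the paper: compute $\rho$ in the Schwarzschild geometry by radially integrating $\sqrt{R_{abcd}R^{abcd}}$ to get $\rho\propto\varLambda\,GM/r^{2}$, then integrate this density over the ball of Schwarzschild radius $R=2GM/c^{2}$ and demand the result equal $M$ (the paper writes this normalization as $M=8\pi\varLambda GMR$), which gives $\varLambda=c^{2}/(16\pi G^{2}M)$. Your caution about the suppressed numerical factors is warranted --- the Kretschmann scalar yields $\sqrt{48}\,GM/(c^{2}r^{3})$, yet the paper quotes $\rho=\varLambda\,2GM/r^{2}$ without tracking the $\sqrt{3}$ --- but that is a bookkeeping defect of the paper's own derivation, not a gap in your approach.
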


\begin{proof}
Because equation (1) always holds in any spacetime condition, if we want to solve $\varLambda$, we only need to solve its value in certain condition, and then it can represent the value of $\varLambda$ in all conditions. In General Relativity, the most simple spacetime condition is the static Schwarzschild Solution. So we need to solve $\varLambda$ in this condition.

For Schwarzschild Solution, we have the components of curvature tensor
\begin{widetext}
\begin{align*}
R_{0101}&=-\frac{2M}{r^{3}} & R_{0202}&=\frac{M}{r}\left ( 1\!-\!2M/r \right ) & R_{0303}&=\frac{M}{r}\left ( 1\!-\!2M/r \right )\sin ^{2}\!\theta\\
R_{2323}&=2Mr\sin \!^{2}\theta & R_{1212}&=-\frac{M}{r}\left ( 1\!-\!2M/r \right )^{-1} & R_{1313}&=-\frac{M}{r}\left ( 1\!-\!2M/r \right )^{-1}\sin ^{2}\!\theta
\end{align*}
And the metric
\begin{equation*}
\begin{split}
ds^2=&-\left ( 1\!-\!\frac{2M}{r} \right )dt^{2}+\left ( 1\!-\!\frac{2M}{r} \right )^{-1}\!dr^{2}+r^{2}\left ( d\theta ^{2}+\sin \!^{2} \theta \,d\varphi ^{2}\right )
\end{split}
\end{equation*}
\end{widetext}
Wherein the coordinate lines of $t$ and $r$ are the only two curves without self-intersection. We gain the spacetime density in this condition
\begin{equation*}
\rho =\varLambda \frac{2GM}{r^{2}}+C
\end{equation*}
Because the curvature tensor equals zero at infinity, so the spacetime density equals zero at infinity, too. Thus, the constant $C=0$.

Next, integrate the density and we have mass
\begin{equation*}
M=8\pi\! \varLambda G\!M\!R
\end{equation*}
And we also have the radius of the black hole
\begin{equation*}
R=\frac{2GM}{c^{2}}
\end{equation*}
Then we can solve the coefficient $\varLambda$.
\end{proof}

Using the theorem above, we can solve the spacetime formula in the condition of Schwarzschild Solution
\begin{equation*}
\rho =\frac{c^{2}}{8\pi Gr^{2}}
\end{equation*}
And find when $r\rightarrow 0$, there is $\rho\rightarrow \infty $, and then we can define this position as the position of matter when there is an observation. Also we discover that the spacetime density in this condition has nothing to do with any intrinsic property of matter but position.

\subsection{Spacetime Fluid}
In General Relativity, the electromagnetic field is described by energy-momentum tensor. This theory tries to describe the electromagnetic field as a kind of perfect fluid, which makes the theory able to be used microscopically.

In General Relativity, there is the energy-momentum tensor of electromagnetic field
\begin{equation*}
T_{ab}=\frac{1}{2\mu}\left ( F_{ac}{F_{b}}^{c}+^{\ast }\!\!{F_{ac}}^{\ast }\!{F_{b}}^{c} \right )
\end{equation*}
And the energy-momentum tensor of perfect fluid
\begin{equation*}
T_{ab}=\left ( \eta +p \right )U_{a}U_{b}+pg_{ab}
\end{equation*}

So if we consider the electromagnetic field as a kind of spacetime fluid, then the energy-momentum tensor of them must be equal. By analogy, we put forward the energy-momentum tensor of spacetime fluid
\begin{equation}
T_{ab}=\eta U_aU_b+ph_{ab}-\frac{1}{2}\left |c\eta\varepsilon_{abcd}V^cU^d\right |
\end{equation}
In the equation, there are two kinds of vector field
\begin{equation*}
\begin{split}
U_a&=Z_a\vec{l}+c\eta^{-\frac{1}{2}}\varepsilon^{\frac{1}{2}}F_{ab}Z^b\vec{m}\,-\eta^{-\frac{1}{2}}\mu^{-\frac{1}{2}} {}^ \ast\!F_{ac}Z^c\vec{n}\\
V_a&=Z_a\vec{l}+c\eta^{-\frac{1}{2}}\varepsilon^{\frac{1}{2}}F_{ab}Z^b\vec{n}\,+\eta^{-\frac{1}{2}}\mu^{-\frac{1}{2}} {}^ \ast\!F_{ac}Z^c\vec{m}
\end{split}
\end{equation*}
And two kinds of scalar field
\begin{equation*}
\begin{split}
&\eta= \frac{1}{2}\left (\varepsilon E^2+\mu^{-1}B^2 \right )\\
&p= -\frac{1}{2}\left (\varepsilon E^2+\mu^{-1}B^2 \right )
\end{split}
\end{equation*}
Wherein, $Z^a$ is the 4-velocity of any observer; $U^a$ is the 4-velocity field of spacetime fluid and $V^a$ is the symmetric of it; $\eta$ is the energy density and $p$ is the pressure; $\vec{l}$, $\vec{m}$ and $\vec{n}$ are orthonormalized quantity; $\varepsilon$ is vacuum dielectric constant and $\mu$ is permeability of vacuum; $E$ is electric field strength and $B$ is magnetic field strength; $h_{ab}$ is the induced metric and $\varepsilon_{abcd}$ is the adapted volume element; $c$ is light speed in vacuum.

Then, we can discuss the significance of each quantity in the spacetime fluid. First, it's easy to verify that the components of the energy-momentum tensor of spacetime fluid are the same as that of the electromagnetic field. We transform the electromagnetic field into a kind of spacetime fluid and construct the 4-velocity by electromagnetic field tensor. The orthonormalized quantity used in the 4-velocity of spacetime fluid has nothing to do with the geometric vector space. Its function is only to insure that (2) will always hold. Thus, the electromagnetic field is described by two vector fields and two scalar fields.

Next, we are going to study further on $\vec{l}$, $\vec{m}$ and $\vec{n}$ . We find that $U^a$ and $V^a$ have an interesting symmetry and it can be easily explained if $\vec{l}$, $\vec{m}$ and $\vec{n}$ have the form of the combination of wave function $e^{\mathbf{i}\theta}$ and spin matrix $\phi$ together with its conjugate $\bar{\phi}$. So make a transformation on $U_a$
\begin{equation}
|U_a\rangle\!=Z_a\phi^\mp \!+\!\eta^{-\frac{1}{2}}\mu^{-\frac{1}{2}}\left ( F_{ab}Z^be^{\mathbf{i}\theta}\phi^\pm\!\!-\!{}^ \ast\!F_{ac}Z^c\mathbf{i}e^{\mathbf{i}\theta}\phi^\pm\right )
\end{equation}
And see its conjugate
\begin{equation*}
\langle U_a|\!=Z_a\bar{\phi}^\mp \!+\!\eta^{-\frac{1}{2}}\mu^{-\frac{1}{2}}\left ( F_{ab}Z^be^{-\mathbf{i}\theta}\bar{\phi}^\pm\!\!+\!{}^ \ast\!F_{ac}Z^c\mathbf{i}e^{-\mathbf{i}\theta}\bar{\phi}^\pm\right )
\end{equation*}
Thus, we have
\begin{equation*}
\langle V_a|=\mathbf{i}\langle U_a|
\end{equation*}
So the energy-momentum tensor can be simplified as
\begin{equation}
T_{ab}=\eta \langle U_a|U_b\rangle+ph_{ab}-c\eta\left |{}^{\ast }\mathbf{i} \langle U_a|U_b\rangle\right |
\end{equation}

From (3) we discover that there is wave inside the electromagnetic field, which may be the origin of the electromagnetic wave. Equation (3) and Equation (4) are the final result of the discussion of spacetime fluid.
\subsection{Spacetime Torsion and Energy Exchange}
Einstein used the spacetime without the torsion when developing his General Relativity, which means he only considered the spacetime curvature but not contortion. Then Cartan established the spacetime with torsion and it became the Einstein-Cartan Theory. According to this theory, the distribution of matter will influence the torsion of spacetime. Hence, in the condition of this theory will there be some other phenomena.

In order to express these phenomena, first we consider the definition of torsion tensor
\begin{equation*}
\left ( \nabla _{\!\!a}\nabla_{\!b} -\nabla_{\!b} \nabla _{\!\!a} \right ) f=-{T^{c}}_{ab}\nabla _{\!\!c}f
\end{equation*}

For ordinary matter field $\mu$, we put $f =\mu$ into the equation, and consider the difference of a periodic transport along a closed path, then we gain
\begin{equation*}
\delta \mu=\int\!\!\!\!\! \int {T^{c}}_{ba}\nabla _{\!\!c}\,\mu \cdot I^{a}d\iota \cdot N^{b}d\nu
\end{equation*}
Hence, we see the energy of the field in connection with self-spin will dissipate. In this theory, this part of energy will be exchanged with the spacetime. The energy is released or taken in by the spacetime.

For further discussion, it's necessary to distinguish two kinds of different torsion tensor field: $_{o}{T^{c}}_{ab}$ and $_{i}{T^{c}}_{ab}$. $_{o}{T^{c}}_{ab}$ means the torsion tensor field produced by the matter except the target object, and $_{i}{T^{c}}_{ab}$ means the torsion tensor field produced by the target object itself. Due to the spin of the object, ${T^{c}}_{ab}$ will cause its energy to be absorbed by the spacetime, matter field losing energy, which is called positive energy exchange. And because of $_{i}{T^{c}}_{ab}$ is created by the object itself, it will make the spacetime energy release from spacetime to matter field, the energy of the matter field rise, which is called negative energy exchange.

To construct the equation of energy exchange, we need the theorems below.
\begin{theorem}
The length of space rotation angular velocity vector is invariant to any inertial observer or coordinate system.
\end{theorem}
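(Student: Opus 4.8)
The plan is to realize the space rotation angular velocity as a genuine spacetime vector field and then to reduce the claim to the coordinate-independence of a completely contracted tensor. First I would encode the instantaneous rotation of the orthonormal spatial triad carried along the worldline of the matter by an antisymmetric tensor field $\Omega_{ab}$, defined through the transport law of that triad relative to Fermi--Walker transport, so that $\Omega_{ab}=-\Omega_{ba}$ and $\Omega_{ab}Z^b=0$, where $Z^a$ is the 4-velocity of the matter (equivalently of the comoving observer). One then forms the associated angular velocity vector by the Hodge dual $\omega^a=\tfrac12\,\varepsilon^{abcd}Z_b\Omega_{cd}$, which is manifestly a 4-vector because $\varepsilon_{abcd}$, $Z^a$ and $\Omega_{ab}$ are all tensors, and which is orthogonal to $Z^a$.

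Next I would compute the squared length $\omega^a\omega_a=g_{ab}\,\omega^a\omega^b$. Being a full contraction of tensor fields with the metric, this quantity carries no free abstract index and is therefore a scalar; by the transformation rules of the abstract index notation used throughout this article it takes the same value in every coordinate system, which settles the coordinate part of the statement. For the inertial-observer part, I would use that any two inertial observers are related by an element of the Lorentz group, which by definition preserves $g_{ab}$ and the volume element $\varepsilon_{abcd}$; since $\Omega_{ab}$ and $Z^a$ are built from the matter's own frame transport and not from the observing frame, $\omega^a$, and hence $\omega^a\omega_a$, are mapped to themselves under such a change. To make contact with the elementary quantity $|\vec\omega|$ that will enter the energy-exchange equation, I would pass to the rest frame of the matter, where $Z^a=(1,0,0,0)$ forces $\omega^0=0$, so that $\omega^a\omega_a=|\vec\omega|^2$; the equality of the scalar across frames then delivers the invariance of this ordinary three-length.

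The step I expect to be the real obstacle is the very first one: showing that ``space rotation'' admits an observer-independent meaning at all, i.e. that the antisymmetric part $\Omega_{ab}$ extracted from the frame-transport law is well defined and does not secretly depend on the choice of auxiliary orthonormalized quantities (as with the $\vec l,\vec m,\vec n$ introduced above) or on an arbitrary split of the transport into boost and rotation. Concretely one must verify that under a spatial rotation of the chosen triad the extra term appearing in $\Omega_{ab}$ is pure gauge and cancels in the norm of the dual vector, and that the naive relativistic transformation of the three-vector $\vec\omega$ — which is by itself \emph{not} norm-preserving under boosts, the perpendicular components picking up Lorentz factors — is exactly compensated by the induced change in $\omega^0$. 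Once that compatibility is established, the scalar-invariance argument above closes the proof with only routine index manipulation.
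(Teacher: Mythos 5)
Your proposal is correct and rests on the same underlying principle as the paper's proof --- namely that the squared length of the angular velocity can be exhibited as a complete contraction of tensor fields and is therefore a scalar --- but the execution is genuinely different. The paper starts from the transport law $g_{ab}\,Dw^{b}/d\tau=\varepsilon_{abcd}Z^{b}w^{c}\omega^{d}$, contracts both sides with themselves under the assumption that $w^{c}$ and $\omega^{d}$ are perpendicular, and solves for $\omega^{2}=4w^{-2}Z^{c}\nabla_{c}w^{b}\,Z^{d}\nabla_{d}w_{b}$, then observes that everything on the right-hand side is covariant. You instead package the rotation into an antisymmetric tensor $\Omega_{ab}$ attached to the Fermi--Walker transport of the comoving triad and recover $\omega^{a}$ as the dual $\tfrac12\varepsilon^{abcd}Z_{b}\Omega_{cd}$, so that the scalarity of $\omega^{a}\omega_{a}$ is immediate. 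Your route buys two things the paper's does not: it makes explicit why $\omega^{a}$ is a 4-vector at all (the paper takes this as given), and it isolates the genuine obstacle --- the possible gauge-dependence of the split of the triad transport into boost and rotation --- which the paper never addresses. Conversely, the paper's route produces a concrete working formula for $\omega^{2}$ in terms of $Z^{c}\nabla_{c}w^{b}$; your argument would need one more line (evaluating $\Omega_{ab}$ on a transported $w^{a}$) to recover it. Note also that the paper's contraction step, which produces the prefactors $n$ and $4$, is less careful than your construction: contracting $\varepsilon_{abcd}Z^{b}w^{c}\omega^{d}$ with itself does not factor as $\varepsilon^{2}Z^{2}w^{2}\omega^{2}$ without invoking the mutual orthogonality of $Z^{a}$, $w^{a}$ and $\omega^{a}$, which is exactly the kind of hidden assumption your final paragraph flags.
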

\begin{proof}
In General Relativity, suppose there are world lines of a particle $G$, then for space rotation angular velocity vector is there an equation \cite{WG}
\begin{equation*}
g_{ab}\frac{Dw^{b}}{d\tau }=\varepsilon _{abcd}Z^{b}w^{c}\omega ^{d}
\end{equation*}
Within $D/d\tau$ indicates Fermi-Walker derivative, $\varepsilon _{abcd}$ is the volume element adapted with $g_{ab}$, $Z^{b}$ is the tangent vector of the world line, $w^{c}$ is the space vector in any point in any world line of $G$. Change all the indexes and then contract to get the length of $\omega ^{d}$
\begin{equation*}
n\frac{Dw^{a}}{d\tau }\frac{Dw_{a}}{d\tau }=\varepsilon^{2}Z^2w^{2}\omega ^{2}
\end{equation*}
Within $Z$, $w$, $\omega$, $\varepsilon$ is the length of $Z^b$, $w^{c}$, $\omega^d$, $\varepsilon _{abcd}$ with contraction. Also there is
\begin{equation*}
\frac{Dw^{a}}{d\tau }={h^{a}}_{b}Z^{c}\nabla _{c}w^{b}
\end{equation*}
Within ${h^{a}}_{b}$ is the projection map, and without loss of generality, suppose $w^c$ and $\omega^d$ is perpendicular, so we get
\begin{equation*}
\omega ^{2}=4w^{-2}Z^{c}\nabla _{c}w^{b}Z^{d}\nabla _{d}w_{b}
\end{equation*}
And discover all the quantities in the right side of equation are invariant to any inertial observer and coordinate system.
\end{proof}

\begin{theorem}
The angular momentum vector in General Relativity can be formulated by
\begin{equation*}
J^a=m\omega^a r^\iota r_\iota
\end{equation*}
Within, $m$ is the rest mass, $\omega^a$ is the space rotation angular velocity vector, $r^\iota$ is the displacement vector.
\end{theorem}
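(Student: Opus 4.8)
The plan is to recover the formula from the elementary ``moment of momentum'' construction, transcribed into the relativistic setting and then dualised into a vector, in the same spirit as the proof of Theorem 3. First I would recall that the angular momentum of a particle about a chosen reference worldline is carried in General Relativity by the antisymmetric tensor $J^{ab}=r^ap^b-r^bp^a$, where $p^a$ is the $4$-momentum, $m$ the rest mass, and $r^a$ the displacement vector from the reference line to the particle. Restricting to the hypersurface orthogonal to the observer's $4$-velocity $Z^a$ and projecting with the map ${h^a}_b$ used in Theorem 3, the spatial momentum is $p^a=mv^a$ with $v^a$ the spatial velocity, and the spatial part of $J^{ab}$ may be dualised with the adapted volume element, giving the angular momentum vector $J^a={\varepsilon^a}_{bcd}Z^br^cp^d$ once the antisymmetry of $\varepsilon_{abcd}$ is used to absorb the factor $\tfrac12$ coming from the two terms of $J^{cd}$.

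Next I would eliminate $v^a$ in favour of the space rotation angular velocity vector $\omega^a$ of Theorem 3. For a rigidly rotating configuration the instantaneous spatial velocity of the particle is $v^a={\varepsilon^a}_{bcd}Z^b\omega^cr^d$, the curved-spacetime transcription of $\vec v=\vec\omega\times\vec r$. Substituting this expresses $J^a$ as a double contraction of two copies of $\varepsilon_{abcd}$ against $Z^a$, two factors of $r^a$, and one factor of $\omega^a$; contracting the two volume elements by the standard identity --- restricted to the three-dimensional subspace orthogonal to $Z^a$, where it collapses to a difference of two products of Kronecker deltas --- leaves exactly two terms,
\begin{equation*}
J^a=m\left(\omega^a\,r^\iota r_\iota-r^a\,\omega^\iota r_\iota\right).
\end{equation*}

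Finally I would invoke the perpendicularity stipulation used ``without loss of generality'' in the proof of Theorem 3, namely that the displacement vector and the rotation vector are orthogonal, $\omega^\iota r_\iota=0$, so that the second term drops out and $J^a=m\omega^ar^\iota r_\iota$, which is the assertion. As in Theorem 1, the contracted pair $r^\iota r_\iota$ is to be read as the squared length $|r|^2$ of the displacement, so the right-hand side is simply the vector $\omega^a$ rescaled by a scalar, reproducing the classical $\vec L=mr^2\vec\omega$ for rotation about a perpendicular axis, and one abstract index is again left out to keep the two sides balanced.

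The main obstacle will not be the Levi-Civita algebra but the two structural inputs that let it run: that the relation $v^a={\varepsilon^a}_{bcd}Z^b\omega^cr^d$ between linear and angular velocity still holds after descending to the observer's orthogonal slice, and that $\omega^\iota r_\iota=0$ may genuinely be imposed without loss of generality rather than only in a special configuration. I expect to handle the first by working in a Fermi normal frame along the reference worldline, where the spatial metric is flat to the order that matters and the classical cross-product identities transfer verbatim, and the second by the observation --- already implicit in Theorem 3 --- that only the component of $\omega^a$ perpendicular to $r^a$ produces genuine rotation of the particle, so the component along $r^a$ carries no angular momentum and may be dropped from the definition of the rotation vector.
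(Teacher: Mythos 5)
Your proposal follows essentially the same route as the paper: the paper likewise takes the dualised angular momentum $J_a=\varepsilon_{abcd}Z^bP^cr^d$ as the starting point and combines it with the cross-product relation $U_a=\varepsilon_{abcd}Z^br^c\omega^d$ for the $4$-velocity, leaving the $\varepsilon$--$\varepsilon$ contraction and the perpendicularity of $\omega^a$ and $r^a$ (already stipulated in Theorem 3) implicit in the phrase ``we can solve the formulation.'' Your version simply makes those two final steps explicit, which is a faithful filling-in rather than a different argument.
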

\begin{proof}
Define the displacement vector and angular momentum by
\begin{equation*}
\begin{split}
r_\iota &= \int \!\!\sqrt{{\delta_{\iota}}^{c}h_{c\iota}}\\
J_a&=\varepsilon _{abcd}Z^b P^c r^d
\end{split}
\end{equation*}
Wherein, $h_{c\iota}$ is the induced metric and $P^c$ is the 4-momentum.

Then, with the help of
\begin{equation*}
U_a=\varepsilon _{abcd}Z^br^c\omega^d
\end{equation*}
We can solve the formulation of the angular momentum vector.
\end{proof}

Then we can calculate the variation of energy. First consider positive energy exchange. For photons and other particles, there is quantized angular momentum
\begin{equation*}
J=\sqrt{j\left ( j+1 \right )}\hbar
\end{equation*}
According to the theorem, there is
\begin{equation*}
J=m\omega r^{2}
\end{equation*}
So we can solve them and gain
\begin{equation*}
\omega=\frac{\sqrt{j\left ( j+1 \right )}\hbar}{mr^{2}}
\end{equation*}
Hence, there is
\begin{equation*}
\delta t=T=\frac{2\pi mr^{2}}{\sqrt{j\left ( j+1 \right )}\hbar}
\end{equation*}
So we get the positive energy exchange rate
\begin{equation*}
\eta =\!\!\int \!\frac{\delta \mu }{\delta t}dr=\!\!\int dr\frac{\sqrt{j\left ( j+1 \right )}\hbar}{2\pi mr^{2}}\int\!\!\!\!\! \int {T^{c}}_{ba}\nabla _{\!\!c}\,\mu \cdot I^{a}d\iota \cdot N^{b}d\nu
\end{equation*}
Next consider the negative energy exchange of ordinary particle and black hole. For ordinary particle, it's obvious that there is
\begin{equation*}
\gamma=\int dr\frac{\sqrt{j\left ( j+1 \right )}\hbar}{2\pi mr^{2}}\int\!\!\!\!\! \int {T^{c}}_{ba}\nabla _{\!\!c}\,\mu \cdot I^{a}d\iota \cdot N^{b}d\nu
\end{equation*}
For black hole, because of
\begin{equation*}
J=Ma
\end{equation*}
So its negative energy exchange is
\begin{equation*}
\gamma=\int dr \frac{a}{2\pi r^{2}}\int\!\!\!\!\! \int {T^{c}}_{ba}\nabla _{\!\!c}\,\mu \cdot I^{a}d\iota \cdot N^{b}d\nu
\end{equation*}

Next we discuss the result separately. For photons, due to the nonexistence of rest mass, there is no negative energy exchange, but positive energy exchange. As a result, the energy of photons will be dissipated, whose rate depends on total torsion tensor field. For ordinary particles, when $_{o}{T^{c}}_{ab}$ does not appear or is so small that it can be ignored, there is ${T^{c}}_{ab} \approx $ $_{i}{T^{c}}_{ab}$. So the rate of two energy exchange is almost equal, resulting in the invariance of its mass. For the black hole, because there is no quantized angular momentum, there is no positive energy exchange. Supposing a KN black hole with spin, the mass will rise ceaselessly, whose rate depends on the torsion tensor field created by itself and its angular velocity.

Hence, this theory can explain Hubble's law and parts of the problems of dark matter. For Hubble's law, due to the loss of the energy of photons produced by stars in galaxies, the spacetime expands, resulting in the distance between galaxies increasing. If we suppose the galaxies homogeneously distribute, it's easy to see the spacetime expansion and distance is proportional. For some galaxies rotating too fast, it is possible that it is because the black hole absorbs the spacetime energy, the spacetime contracts, causing the galaxies difficult to break up. It's also possible for some photons released from some matter in the galaxy to run out of energy and become unable to be discovered, forming what is called ``dark matter''.

We need to declare that all quantities in connection with spacetime are the spacetime property in the theory, but there is a sort of special quantities called intrinsic spacetime property. Only some of the spacetime quantities such as spacetime curvature, torsion, density and spin belong to the intrinsic property.

\section{Quantization of General Relativity}
This theory finally will try to be combined with Quantum Mechanics. Quantum Mechanics describes phenomena microscopic and quantized, so observation and measurement become very important concepts. So, the theory needs to define observation and measurement.

\begin{Definition}
Signal is a kind of map $\chi\!\!:\!\!V^n\rightarrow V^n$, wherein $V^n$ is the Cartesian product of any vector space $V$ for $n$ times.
\end{Definition}
\begin{Definition}
Measurement is a set $\mathcal{F}$ of map $f$, satisfy\\
$\left (a  \right )$ $f\!:\!Q\rightarrow \mathbb{R}^n$ is continuous, wherein $Q$ is the set of certain physical quantities\\
$\left (b  \right )$ $\forall f,g\in \mathcal{F}$, and $f\!:\!q \mapsto x$, $g\!:\!q \mapsto y$, then $g\circ\! f^{-1}\!:\!x \mapsto y$ is smooth\\
$\left (c  \right )$ $\exists h$ such that when $f\!:\!q \mapsto x$, $f\!:\!p \mapsto y$ and $h\!:\!p \mapsto q$, $f\circ h\circ\! f^{-1}\!:\!x \mapsto y$ is smooth
\end{Definition}
\begin{Definition}
Observation is a signal, if\\
$\left (a  \right )$ It is non spacetime intrinsic property signal\\
$\left (b  \right )$ It is released by an object called signal emission source\\
$\left (c  \right )$ It can be measured, and when it becomes any observable quantity $\hat{\alpha }$, the result $a$ satisfy
\begin{equation*}
\hat{\alpha }\!\left | \varphi \right \rangle=a\left | \varphi \right \rangle
\end{equation*}
Within $\left | \varphi \right \rangle$ indicates some state of a system
\end{Definition}

From the definition can we see that signal changes the state of an object, which provides us with methods to test the state of a signal emission source by measuring the change of certain physical quantity of the particles in the instrument, which often appears as the interaction of quanta. The definition of measurement guarantees that we can use a series of real numbers to represent the state of the signal emission source, and satisfies the definition of $n$-dimensional differential manifold, which provides the symmetry of unit exchange. Observation is a special signal and the result of measurement is one of its eigenvalues.

According to the definition of observation and measurement above, there are two important points: \ding{192}There is observation without measurement. No matter whether we measure the object, it is the signal source, as long as it is releasing observation, with its wave function collapsed. \ding{193}The observation does not contain spacetime intrinsic property signal, which allow us to measure the spacetime intrinsic property without affecting the wave function in Quantum Theory.

Based on signal, measurement and observation, we try to combine the General Relativity with the Quantum Mechanics separately. First we have to introduce some new models and equations. It's necessary to leave out some property about vectors in some equation where geometrical vector space and Hilbert space both exist. Then, about the new model of particles is there the quantized matter distribution
\begin{equation*}
x\left ( r \right )=x\cdot \omega \left ( r \right )=x\cdot \left \langle \varphi\!\!\mid \!\!\varphi  \right \rangle
\end{equation*}
Wherein $\omega \left ( r \right )$ is the probability density, $x$ is any quantity in connection with the intrinsic property of a particle, putting a particle into a kind of perfect fluid. One of the most important examples is the mass distribution, after which we can construct the curved Quantum Equation.

Analogized from the wave equation of light, we gain the equation of free particles
\begin{equation*}
\hbar^2c^2\nabla_{\!\!a} \nabla^a |\varphi\rangle-m^2c^4|\varphi\rangle=0
\end{equation*}
And if we need to consider any gauge field $T^a$, we will have
\begin{equation}
\kappa^{2} \sigma^2 \hbar^2c^2\nabla_{\!\!a} \nabla^a |\varphi\rangle-\rho^2 m^2c^4|\varphi\rangle-\delta \kappa^{2}\sigma^2 T_aT^a |\varphi\rangle=0
\end{equation}
This is the basic curved Quantum Equation. Within, $\nabla_{\!\!a}$ is the curved derivative operator; $m$ is the rest mass of particle; $T^a$ is a possible potential field and $\delta$ is any number; $\kappa$, $\sigma$ and $\rho$ are matrices aimed at the spin.

By defining a new derivative operator as
\begin{equation*}
\hat\nabla_{\!\!a} =\nabla_{\!\!a}+\!T_{a}
\end{equation*}
And with the help of a gauge condition
\begin{equation*}
\nabla_{\!\!a}T^{a}=0
\end{equation*}
And combining the matrices together, we can rewrite equation (5) in natural unit as
\begin{equation*}
\sigma^2 \hat{\nabla}_{\!\!a} \hat{\nabla}^{a} |\varphi\rangle-\rho^2 m^2|\varphi\rangle=0
\end{equation*}

Define the 4-probability-density-current as
\begin{equation}
J^a=\mathbf{i}\hbar^2c^2\left (\langle \varphi|\nabla^a|\varphi \rangle- |\varphi\rangle\nabla^a\langle\varphi| \right )
\end{equation}
Then it's easy to see (6) leads to the conservation equation below
\begin{equation*}
\nabla_{\!\!a}J^a=0
\end{equation*}

Then we need the hypotheses below
\begin{Hypothesis}
Arbitrary particle can be expressed by quantized matter distribution.
\end{Hypothesis}
\begin{Hypothesis}
Evolution of particle states satisfy the curved Quantum Equation.
\end{Hypothesis}
\begin{Hypothesis}
Interaction between particles are described by spacetime property.
\end{Hypothesis}

Next we illustrate the rationality of these hypotheses: The General Relativity and Quantum Mechanics should be treated on the merits of each case. When an object has continuous observation, General Relativity functions in one of its neighborhood with continuous observation, and all the contents in Quantum Mechanics embeds into the General Relativity with the wave function collapsed to $\delta$ function, for no matter the observation is created by the object itself or reflected from the instrument, its wave function should collapse in one of its neighborhood with continuous observation. Owing to the non-measurement observation, we can deem that the object has certain position and certain coordinate time and can look upon it as spacetime point in General Relativity without considering the nondeterministic finite automation. When the object has no observation or has only discontinuous observation, Quantum Mechanics functions, and we can construct the spacetime background by quantized matter distribution. Thus, the simultaneous formulas will give all the information of a particle.

Hence, the connection between matter distribution and wave function is established. In the critical point, the variance of wave function indicates the variance of matter distribution, and then affects the spacetime curvature. For instance, suppose there is an electron in a hydrogen atom in ground state with the wave function decreased exponentially. In this condition, the quantized mass distribution functions, and the effect of electrons on spacetime is described by quantized mass distribution. However, if some time a photon comes and gets the electron to have continuous observation, then the wave function of the electron will collapse immediately. The electron becomes a point particle at this time, and the effect on spacetime changes to the effect of point particle mass distribution on spacetime.

Using the conclusion above, we can try to give the probability explanation for Quantum Equation a new meaning. First of all, we suppose there exists a possible energy exchange in the intersection of particles, which is formulated by
\begin{equation*}
\delta H=\sum_{i,j}^{n}c^{2}\!\!\int \min\left \{ m_i\left \langle \varphi_i|\varphi_i \right \rangle\!,m_j\left \langle \varphi_j|\varphi_j \right \rangle \right \}dv
\end{equation*}
And if $T_i$ is the energy of particle $i$ that may be carried by itself or gain from outside, then the possible energy exchange will be
\begin{equation}
\delta E=\sum_{i,j}^{n}c^{2}\!\!\int \min\left \{ m_i\left \langle \varphi_i|\varphi_i \right \rangle\!,m_j\left \langle \varphi_j|\varphi_j \right \rangle \right \}dv\!+\!T_i\!+\!T_j
\end{equation}
If this quantity at point $p$ equals (or greater than for continuous condition) the energy difference of any two energy levels of one of the particles $i$, then the energy $\delta H$ together with $T_i$ and $T_j$ will be transferred to particle $i$ and at the same time, the wave function of particle $j$ collapses, taking all its energy to point $p$ and releasing the energy equal to $T_i+T_j$.

From (7) we discover that the possibility of finding a particle at a certain point is influenced by the wave function and the possible energy $T_i$. This can be used to explain the point or line that we see in the instrument. If the energy of the instrument particles (for example, the kinetic energy of atoms) is various enough, the possibility of finding a target particle at a certain place will be proportional to the wave function, and hence form the certain image (for example, the interference fringes).

Thus, the probability explanation of Quantum Equation will only hold in the condition that the environment will give various enough $T_i$ for the particles. But this condition is always satisfied since our instrument usually works at the temperature of more than $270K$, which provides at least enough kinetic energy of atoms.

\section{Discussion and Final Remarks}
This article advances the GUSFT Theory and aims at establishing a framework of united theory. We mainly present certain spacetime fields and the application of matter distribution in quantization of General Relativity.

The spacetime density provides us with an inspiration that there is mass hidden inside the spacetime. Together with the spacetime torsion, we find that this configuration of mass plays an important part in the evolution of universe.

The spacetime fluid may reflect to the essence of electromagnetic field. We compare the energy-momentum tensor of electromagnetic field and perfect fluid, from which the 4-velocity of electromagnetic field can be formulated. Using the 4-velocity, we rewrite the energy-momentum tensor and discover what may be the origin of electromagnetic wave.

The new approach aimed at the quantization of General Relativity begins with the definition of signal, measurement and observation. The quantized matter distribution is defined as the description of matter in a system so that the theory can overcome the difficulty raised by the uncertainty of matter displacement in Quantum Theory. And thus, the evolution is provided by the simultaneous equations
\begin{equation*}
\begin{split}
\sigma^2 \hat{\nabla}_{\!\!a} & \hat{\nabla}^{a} |\varphi\rangle-\rho^2 m^2|\varphi\rangle=0 \\
R_{ab} & -\frac{1}{2}Rg_{ab}=\kappa T_{ab}
\end{split}
\end{equation*}

By the discussion above, we are able to point out that the possibility explanation in Quantum Theory is only a statistical effect. The energy of instrument particle may lead to some bizarre phenomena in Quantum Theory.

In summary, this article constructs the GUSFT Theory and combines the General Relativity and Quantum Mechanics to some extent by quantized matter distribution, which provides another inspiration of united theory.

\end{document}